\newtheorem{defi}{Definition}
\newtheorem{lemm}{Lemma}
\newtheorem{coro}[lemm]{Corollary}
\newtheorem{theo}[lemm]{Theorem}
\newenvironment{proof}{\par\noindent{\bf Proof}}%
{\hspace{\stretch{1}}$\Box$}
\newcommand{\prefix}{\sqsubseteq}
\newcommand{\naturals}{\mathbf{N}}
\newcommand{\qi}[0]{q_{\mathrm{I}}}
\newcommand{\sigmao}[0]{\Sigma_{\mathrm{o}}}
\newcommand{\Sigmao}[0]{\sigmao}
\newcommand{\path}[0]{\mathit{\rho}}
\newcommand{\trans}[1]{\stackrel{#1}{\rightarrow}}
\newcommand{\lang}[0]{\mathcal{L}}
\newcommand{\langf}[1]{\lang_{#1}}
\newcommand{\obs}[0]{\mathrm{obs}}
\newcommand{\dmin}[1]{\mathit{dmin}_{#1}}
\newcommand{\dmax}[1]{\mathit{dmax}_{#1}}
\newcommand{\distances}{\mathit{distances}_F}
\newcommand{\belief}{\mathcal{B}}
\newcommand{\optimal}[1]{{#1}^\star}
\newcommand{\optipred}{\optimal{P}}
\newcommand{\set}[1]{\{#1\}}
\newcommand{\queue}[0]{\mathcal{Q}}
\newcommand{\nbsucc}[0]{\mathrm{nsucc}}
\begin{document}
\title{Interval Predictability \mbox{in Discrete Event Systems}}
\author{Alban Grastien%
\thanks{A.~Grastien is with NICTA, Australia, 
and the Australian National University.}
}

\maketitle

\begin{abstract}
  In this paper we study the problem of predictability 
  in partially observable discrete event systems, 
  i.e., the question whether an observer 
  can predict the occurrence of a fault.  
  We extend the definition of predictability 
  to consider the time interval where the fault will occur: 
  the $(i,j)$-predictability does not only specify 
  that the fault will be predicted before it occurs, 
  but also that the predictor will be able 
  to predict that its occurrence will occur 
  in $i$ to $j$ observations from now.  
  We also provide a quadratic algorithm 
  that decides predictability of the system.  

  {\bf Keywords: Predictability, Discrete Event Systems}
\end{abstract}

\section{Motivation}
A fault is predictable 
if its unavoidable occurrence can always be determined in advance.  
Being able to predict the fault 
allows the supervisor to step in 
and take preventive actions, 
such as reconfiguring the system, 
replacing damaged components, or shutting the system down.  

Predictability has been greatly studied in the last decade 
(some references are provided in the related work section).  
To be maximally effective, 
the prediction should satisfy two criteria: 
it should be made well in advance, 
so that the operator has enough time 
to decide for and perform corrective actions; 
it should be reasonably precise, 
so that the repair is not performed too early 
if that is unnecessary.  
The first contribution of this paper is the formalisation 
of these two objectives: 
we define the notion of $(i,j)$-predictability, 
a generalisation of the existing notion of predictability 
that states that faults can always be predicted 
at least $i$ timesteps in advance 
and, when this prediction is made, 
the fault will not occur in more than $j$ timesteps.  

We study this definition of predictability and 
we propose an algorithm that computes all pairs $(i,j)$ 
for which predictability holds.  
We show that this algorithm runs in quadratic time.  
This is an improvement over the existing predictability algorithms 
that run in $O(n^4)$.  

This paper is organised as follows.  
Next section presents preliminary definitions.  
Our definition of predictability is presented in Section~\ref{sec::defi}, 
together with a discussion of its benefits.  
Our algorithm is given in Section~\ref{sec::solution}.  
Existing approaches are discussed in Section~\ref{sec::related}.  

\section{Preliminaries}
\label{sec::prelim}
\subsection{Discrete Event Systems}

This work is applicable to finite discrete event systems (DES) 
\cite{cassandras-lafortune::99}.  
The system is modeled as a DES and is assumed fixed for this paper.  
A (finite) DES is a model for dynamic systems 
where the state space is discrete (and finite) 
and is modeled as a finite state machine.

A (partially observable) \emph{finite state machine} (FSM) 
is a tuple $A = \langle Q,\Sigma,T,\qi,\sigmao\rangle$ 
where $Q$ is a finite set of states, 
$\Sigma$ is a finite set of events, 
$T \subseteq Q \times \Sigma \times Q$ is a finite set of transitions, 
$\qi \in Q$ is the initial state, 
and 
$\sigmao \subseteq \Sigma$ is a finite set of observable events.  

To simplify notations, it is assumed that the FSM is deterministic, 
i.e., there is only one initial state 
and there are no two transitions originating from the same state 
and labeled with the same event: 
\begin{displaymath}
  \left\{\langle q,e,q'_1\rangle,\langle q,e,q'_2\rangle\right\} 
  \subseteq T\ \Rightarrow\ q'_1 = q'_2.  
\end{displaymath}
This assumption is not restrictive 
as any non-deterministic FSM can be turned into a deterministic FSM 
that is equivalent from a predictive/monitoring perspective, 
by adding a number of states and transitions 
smaller than the original number of transitions 
and without affecting the overall complexity of the algorithm.  
Furthermore the algorithms presented later apply 
to non-deterministic FSM as well.  
The assumption of determinism is however convenient 
because there a one-to-one mapping between a path and a trace 
(defined below).  

A \emph{path} $\path$ is a double sequence 
of states and events $q_0\trans{e_1}\dots\trans{e_k}q_k$ 
such that $\forall i \in \{1,\dots,k\},\ 
\langle q_{i-1},e_i,q_i\rangle \in T$.  
The label $u$, called the \emph{trace}, of the path 
is the sequence of events $u = e_1\dots e_k$.  
That there exists a path labeled by $u$ from $q_0$ to $q_k$ 
is denoted $q_0 \trans{u} q_k$; 
the state $q_k$ reached from $q$ through $u$ is denoted $q \trans{u}$ 
and the fact that it exists is written $(q \trans{u})\ \in Q$.  

The definition of a path is extended to infinite paths 
$q_0\trans{e_1}q_1\trans{e_2}\dots$ such that for all $i\ge 0$, 
$q_0\trans{e_1}\dots\trans{e_i}q_i$ is a path.  
It is assumed that the system is live, 
i.e., that for any state $q \in Q$, there exists an outgoing transition: 
$\forall q \in Q,\ \exists e \in \Sigma.\ \exists q' \in Q.\ 
\langle q,e,q'\rangle \in T$.  
Infinite traces are denoted $w$ and finite ones $u$.  
The \emph{prefix} relation is denoted $u \prefix v$ 
where $v$ may be finite or infinite.  
We extend the notation $(q \trans{w})\ \in Q$ to infinite traces, 
with the meaning $\forall u \prefix w.\ (q \trans{u}) \in Q$.  

The system starts in state $q_0 = \qi$ 
and takes an infinite path.  
The language $\lang = \{w \in \Sigma^\omega \mid (\qi \trans{w})\ \in Q\}$ 
is defined as the set of infinite words over $\Sigma$ 
that label an infinite path on the FSM starting from the initial state.  

Given a finite word $u \in \Sigma^*$, 
the \emph{observation} of $u$ is the traditional projection 
of $u$ on the set of observable events: 
\begin{displaymath}
\obs(u) = \left\{
\begin{array}{l l}
\varepsilon & \textrm{ if } u = \varepsilon, \\
\obs(u')    & 
\textrm{ if } u = eu' \textrm{ and } e \in \Sigma \setminus \Sigmao, \\
e\ \obs(u')  & 
\textrm{ if } u = eu' \textrm{ and } e \in \Sigmao 
\end{array}
\right.
\end{displaymath}
where $\varepsilon$ is the empty sequence.  
As usual it is assumed that any infinite trace 
generates infinitely many observations.  

\subsection{Faults}

The system can be subject to faults, 
i.e., types of behaviour that we wish to prevent.  
Faults can be defined as a single event 
or as a subtle pattern of events \cite{jeron-etal::dx::06}.  
These two definitions are however very similar: 
the important notion here 
is that it can also be modeled as the property 
of the current (possibly augmented) state of the system 
(normal state vs. faulty state).  
A set $F \subseteq Q$ of states will represent the faulty states: 
a path is faulty if it reaches a faulty state 
($\exists i.\ q_i \in F$).  
The faulty aspect of a trace $u$ 
will therefore be represented by $(\qi \trans{u}) \in F$.  
Notice that, by definition, any transition from a faulty state 
leads to a faulty state: 
\begin{displaymath}
  \langle q,e,q'\rangle \in T \ \land\ q \in F \Rightarrow q' \in F.  
\end{displaymath}
It is assumed that the initial state is not faulty.  
The set of infinite faulty traces is represented 
by language $\langf{F} \subset \lang$, 
which is formally defined as the set of traces 
whose path from $\qi$ is faulty.  

\section{$(i,j)$-Predictability}
\label{sec::defi}
\subsection{Predictability}

Fault prediction is the problem of deciding 
whether an operator should be warned that a fault is bound to occur.  
We want to give guarantees about the prediction of the fault.  
This guarantee is expressed by a tuple $(i,j)$ 
where $i$ (resp. $j$) is a lower bound (resp. upper bound) 
of the fault occurrence.  

In the following a \emph{time interval} is a pair of elements $(x,y)$ 
from $\naturals \cup \set{\infty}$ 
(the natural numbers including zero and infinity) 
so that $x \le y$.  
We define the operator $\ominus$ 
so that $(x,y) \ominus 1 = (x \ominus 1,y \ominus 1)$ 
where $\ell \ominus 1 = \ell$ if $\ell \in \set{0,\infty}$ 
and $\ell \ominus 1 = \ell - 1$ otherwise.  
A time interval $(x,y)$ can be interpreted 
as the set of numbers between $x$ and $y$.  
Under this interpretation 
the relation $(x,y) \subseteq (x',y')$ 
is equivalent to $x' \le x \le y \le y'$; 
and $(x,y) \cup (x',y') = (\min(x,x'),\max(y,y'))$.%
\footnote{Notice that $(x,y) \cup (x',y')$ may contain elements 
that are neither in $(x,y)$ nor in $(x',y')$.}

A predictor is a machine $P$ that, given a sequence $o$ of observations, 
returns a time interval $(x,y) = P(o)$, 
meaning that any trace that matches this sequence 
will not become faulty before $x$ more observations are collected 
(if $x = 0$, the fault may already have occurred) 
but will definitely be faulty before $y$ more observations are 
(or returns $y = \infty$ if the fault is not predicted---%
it may never occur).  
In the coming definition, 
notice that, while this is not explicitely stated, 
if $u$ and $u'$ are two different traces 
that generate the same observations ($\obs(u) = \obs(u')$) 
then the predictor should obviously give the same prediction: 
$P(\obs(u)) = P(\obs(u'))$.  
Hence the predictor has to be conservative 
so as to satisfy the two constraints given in the definition 
for all relevant traces.  
In other words, there are two types of uncertainty: 
uncertainty about what happened until now 
(we only know that the behaviour generated the sequence $o$ 
but the actual behaviour is unknown); 
uncertainty about what will happen from now.  

\begin{defi}\label{defi::predictor}
  A \emph{predictor} is a machine $P$ 
  that takes a sequence of observations 
  and that returns a time interval 
  with the following property: 
  $\forall w \in \lang.\ \forall u_1,u_2$ 
  such that $u_1 \prefix u_2 \prefix w$, 
  let $(x,y) = P(\obs(u_1))$, then 
  \begin{itemize}
  \item 
    $|\obs(u_2)| - |\obs(u_1)| < x 
    \Rightarrow (\qi \trans{u_2})\ \not\in F$ and 
  \item 
    $|\obs(u_2)| - |\obs(u_1)| \ge y 
    \Rightarrow (\qi \trans{u_2})\ \in F$.  
  \end{itemize}
\end{defi}

An $(i,j)$-predictor has the added requirement 
that, before a fault occurs, 
a prediction should be made about the fault occurrence 
that is tighter than, or as tight as, $(i,j)$.  

\begin{defi}\label{defi::ijpredictor}
  A predictor $P$ is an \emph{$(i,j)$-predictor
  for a given trace $w \in \langf{F}$} if 
  \begin{displaymath}
    \exists u \prefix w.\ P(\obs(u)) \subseteq (i,j).  
  \end{displaymath}

  A predictor is an \emph{$(i,j)$-predictor} 
  if it is an $(i,j)$-predictor for every trace $w \in \langf{F}$.  
\end{defi}

$(i,j)$-predictability is then the property 
that an $(i,j)$-predictor exists.  
We also define \emph{$i$-predictability}, 
the property that the fault occurrence can be predicted 
at least $i$ observations before it occurs; 
and \emph{predictability}, 
the property that the fault can be predicted before it occurs.  

\begin{defi}\label{defi::predictability}
  A system is \emph{$(i,j)$-predictable} 
  if there exists an $(i,j)$-predictor for it.  
  It is \emph{$i$-predictable} if it is $(i,j)$-predictable 
  for some $j \in \naturals$.  
  It is \emph{predictable} if it is $i$-predictable 
  for some $i \in \naturals \setminus \set{0}$.  
\end{defi}

Notice that the condition $j \in \naturals$ (i.e., $j \neq \infty$) 
is necessary because forbidding the upper bound of $P(o)$ 
to be $\infty$ forces the predictor 
to predict the fault before its occurrence 
(i.e., the predictor asserts that the fault will definitely occur).  
Similarly we forbid $i = 0$ because we want the fault to be predicted 
in a state where it has not occurred yet.  

\begin{figure}[ht]
  \begin{center}
\begin{tikzpicture}[->,>=stealth',shorten >=1pt,auto,node distance=2cm,
                    semithick]
  \tikzstyle{every state}=[fill=none,draw=black,text=black]

  \node[state,initial,initial text=] (A) at (0,2) {};
  \node[state] (B) at (2,2) {};
  \node[state] (C) at (0,0) {};
  \node[state] (D) at (2,0) {};
  \node[state] (E) at (4,2) {};
  \node[state] (F) at (4,0) {};
  \node[state,fill=gray] (G) at (6,0) {};

  \path
    (A) edge [bend left] node {$a$} (B) 
    (B) edge [bend left] node {$b$} (A)
    (C) edge [bend left] node {$a$} (D) 
    (D) edge [bend left] node {$c$} (C)
    (A) edge node {$t$} (C)
    (B) edge node {$d$} (E)
    (D) edge node {$d$} (F)
    (E) edge node {$c$} (F)
    (F) edge node {$a$} (G)
    (G) edge [loop] node {$a$} (G)
  ;
\end{tikzpicture}
  \end{center}
  \caption{Example of a system; $t$ is the only unobservable event.}
  \label{fig::simple}
\end{figure}
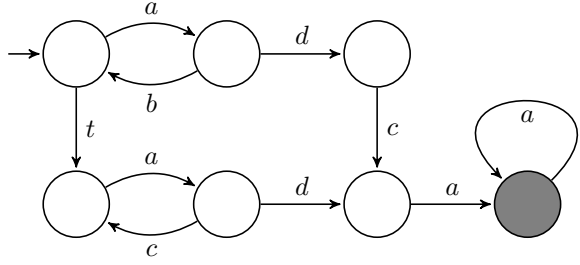

\begin{table}[ht]
  \begin{center}
    \begin{tabular}{c | c}
      Observation pattern & Prediction\\
      \hline
      No $d$ & $[2,\infty]$\\
      Last observed event is $d$ & $[1,2]$\\
      Second last observed event is $d$ & $[0,1]$\\
      Contains $d$ followed by two or more observed events & $[0,0]$
    \end{tabular}
  \end{center}
  \caption{A $(1,2)$-predictor for the system of Figure~\ref{fig::simple}.}
  \label{tab::predictor}
\end{table}

These definitions are illustrated 
with the example of Figure~\ref{fig::simple}.  
The faulty states are represented with grey filling.  
Table~\ref{tab::predictor} presents one predictor.  
For instance the first pattern of the predictor specifies 
that if the sequence of observations does not contain the event $c$ 
then the prediction is $(2,\infty)$, 
i.e., there will be at least two observations 
before the fault occurs, and it may never occur.  
The second pattern specifies 
that if the last event of the sequence of observations is $d$ 
then the prediction is $(1,2)$, meaning that a faulty state 
will be reached after one or two more observations are received.  
Similarly for the third pattern: the prediction is $(0,1)$, 
i.e., it may already have occurred or it will 
when the next observation has been received.  
Finally the last pattern indicates a situation 
where the fault definitely occurred.  

We illustrate that the machine in Table~\ref{tab::predictor} 
(denoted $P$ here)
indeed presents a predictor on a few selected examples.  
We first assume a trace $u_1 = aba$ 
with prediction $P(\obs(u_1)) = (2,\infty)$.  
Consider its continuation $u_2 = u_1b$; 
then the length difference between $\obs(u_2)$ and $\obs(u_1)$ is $1$, 
which is less than $2$; 
therefore $u_2$ has to satisfy $(\qi \trans{u_2}) \not\in F$, 
which it does.  
Consider instead $u_2 = u_1dc$; 
the length difference is this time $2$, 
which means that none of the constraints 
in Definition~\ref{defi::predictor} applies.  
Predictor $P$ is not claimed to be ``optimal'' 
(where the precise definition of optimality is presented later); 
nevertheless one might claim 
that a prediction of $(2,\infty)$ is not very precise 
given that any continuation of $u_1$ requires three observable events 
to reach a faulty state ($dca$ is the shortest).  
Notice however that $P$ does not know that the system trace is $u_1$: 
it only knows the sequence of observations generated by $u_1$, 
i.e., $aba$, which is identical to the sequence 
generated by $u'_1 = abta$; 
this trace $u'_1$ can reach a faulty state in just two observable steps 
($da$), which forces the lower bound of $P(\obs(u_1))$ to be at most $2$.  

Assume now $u_1 = abad$ with prediction $(1,2)$.  
Consider the non-faulty trace $u_2 = u_1c$; 
the length difference is $1$, 
which means that none of the constraints 
in Definition~\ref{defi::predictor} applies.  
Consider instead the faulty trace $u_2 = u_1ca$; 
the length difference is $2$, 
which is greater or equals to the upper bound of the prediction; 
therefore $u_2$ has to satisfy $(\qi \trans{u_2}) \in F$, 
which it does.  

As we can see any faulty trace has to include $d$, 
which means that the flow of observations generated by a faulty trace 
will eventually be associated with the prediction $(1,2)$.  
Therefore the system is $(1,2)$-predictable.  
We can however show that the system is not $(2,2)$-predictable.  
Indeed consider the infinite faulty trace $w = adca^\omega$ 
where the exponent $^\omega$ indicates an infinite repetition of $a$.  
For $w$ to be $(2,2)$-predictable, 
we need to exhibit one of its prefix $u_1$ 
such that one can predict $P'(\obs(u_1)) \subseteq (2,2)$ 
(here $P'(\obs(u_1))$ should exactly equal $(2,2)$).  
Assume that such a prefix and such a predictor exist.  
Following Definition~\ref{defi::predictor}, 
consider a continuation $u_2$ of $u_1$ 
that generates one more observation; 
because $|\obs(u_2)| - |\obs(u_1)| = 1$, 
$u_2$ should not lead to a faulty state.  
Therefore $u_1$ has to belong to the set 
$\set{\varepsilon, a, ad}$.  
Similarly however, if $u_2$ is chosen such that its observable length 
is exactly two more than that of $u_1$, 
then $u_2$ has to lead to a faulty state.  
Therefore $u_1 = ad$ and $P'(\obs(u_1)) = P'(ad) = (2,2)$.  
Consider however the trace $u'_1 = tad$ 
and its continuation $u'_2 = u'_1 a$.  
Clearly $P'(\obs(u'_1)) = P'(ad) = (2,2)$.  
According to Definition~\ref{defi::predictor} 
since $|\obs(u'_2)| - |\obs(u'_1)| = 1 < 2$ 
$u'_2$ should not lead to a faulty state.  
It does however, which shows that no prefix $u_1$ of $w$ 
satisfies $P'(u_1) \subseteq (2,2)$ for some predictor $P'$.  

\subsection{Discussion}

Predictors can be used to stop or rectify the system 
before it produces a faulty behaviour.  
Being able to predict a fault well in advance 
helps getting prepared for intervention; 
this is represented by the $i$ parameter (which should be maximised).  
Being able to predict the time when the fault is likely to happen 
prevents hasty corrections; 
this is represented by the difference $(j-i)$ 
(which should be minimised).  
There is an implicit assumption here 
that the number of observations is indicative of time: 
for instance the system generates one observation per minute.  
This is particularly relevant to hybrid systems modeled as DES 
\cite{vento-etal::tsmc::15}.  

Ideally the system should be $(i,j)$-predictable with a large $i$ value 
and a small $(j-i)$ value.  

We illustrate the definition of predictability by considering the example 
of the potentially critical subsystem of an aircraft.  
This example is, of course, very limited.  
For such a system it is important to predict faults well in advance 
in order to take preventive measures 
(e.g., modify the flight path in order to stay near to an aerodrome).  
On the other hand it is also important to provide a precise prediction 
as emergency landings are expansive.  

In order to provide an early prediction 
we might want the system to be at least $30$-predictable.  
At that stage however, we do not need a precise prediction: 
a $(30,10\,000+)$-predictability is still acceptable.  
For the second requirement however, 
we want to be able to predict the fault quite accurately, 
for instance $(15,240)$-predictability
which suggests that the fault will occur in the next four hours 
and that an unscheduled landing is now necessary.  
So, interestingly, this example requires 
two different predictability properties.  

\section{Solving Interval Predictability Problems}
\label{sec::solution}
This section shows how to verify the predictive level 
of a given system.  

\subsection{Predictive levels}

We first show that, 
while the definition of predictability involves two parameters, 
the dimension of predictability is actually much smaller.  

\begin{lemm}\label{lemm::pred->pred}
  A system that is $(i,j)$-predictable 
  is also 
  \begin{enumerate}
  \item 
    $(i,(j+1))$-predictable (if $j \neq \infty$) 
    and
  \item 
    $((i-1),(j\ominus 1))$-predictable (if $i \ge 2$). 
  \end{enumerate}
\end{lemm}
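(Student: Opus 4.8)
\emph{Proof idea.} The plan is to treat the two items separately, in both cases starting from a fixed $(i,j)$-predictor $P$ for the system. Item~1 is immediate: the defining property of a predictor (Definition~\ref{defi::predictor}) does not mention $i$ or $j$, so $P$ is still a predictor, and for every $w \in \langf{F}$ the prefix $u \prefix w$ witnessing $P(\obs(u)) \subseteq (i,j)$ also witnesses $P(\obs(u)) \subseteq (i,j+1)$, because $i \le i \le j \le j+1$, i.e.\ $(i,j) \subseteq (i,j+1)$; here $j \neq \infty$ guarantees that $(i,j+1)$ is a legal time interval, so $P$ is an $(i,j+1)$-predictor.

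Item~2 needs a genuinely new predictor. The naive candidate $o \mapsto P(o) \ominus 1$ is not a predictor: lowering the interval without moving the observation point makes the lower bound too optimistic and the first clause of Definition~\ref{defi::predictor} can fail. The fix is to combine the downward shift with a one-observation delay. So I would set $P'(\varepsilon) = (0,\infty)$ and, for a non-empty observation sequence $o$, $P'(o) = P(o')\ominus 1$, where $o'$ is $o$ with its last symbol removed. To check that $P'$ is a predictor, take $w \in \lang$ and $u_1 \prefix u_2 \prefix w$; the case $\obs(u_1) = \varepsilon$ holds vacuously since $P'(\varepsilon) = (0,\infty)$, so assume $\obs(u_1) \neq \varepsilon$ and write $u_1 = v_1\,e\,z$ with $e$ the last observable event of $u_1$ and $z$ unobservable, so that $\obs(v_1)$ is $\obs(u_1)$ minus its last symbol, $v_1 \prefix u_2$, and $P'(\obs(u_1)) = P(\obs(v_1)) \ominus 1$. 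With $(x,y) = P(\obs(v_1))$ and $d = |\obs(u_2)| - |\obs(u_1)|$ we have $|\obs(u_2)| - |\obs(v_1)| = d+1$, so the two clauses of Definition~\ref{defi::predictor} for $P$ at the pair $(v_1,u_2)$ read ``$d+1 < x \Rightarrow (\qi \trans{u_2}) \notin F$'' and ``$d+1 \ge y \Rightarrow (\qi \trans{u_2}) \in F$''; a short case distinction on whether $x$ and $y$ lie in $\set{0,\infty}$ turns these into ``$d < x \ominus 1 \Rightarrow (\qi \trans{u_2}) \notin F$'' and ``$d \ge y \ominus 1 \Rightarrow (\qi \trans{u_2}) \in F$'', which are exactly the clauses required of $P'(\obs(u_1)) = (x \ominus 1, y \ominus 1)$.

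Finally, $P'$ is an $(i-1, j\ominus 1)$-predictor: given $w \in \langf{F}$, pick a prefix $u \prefix w$ with $P(\obs(u)) \subseteq (i,j)$ (possible since $P$ is an $(i,j)$-predictor). As $w$ generates infinitely many observations it has a prefix $u^\star$ with $|\obs(u^\star)| = |\obs(u)| + 1$, and $u \prefix u^\star$ because two prefixes of $w$ are $\prefix$-comparable and $u^\star$ has strictly more observations than $u$; hence $\obs(u^\star)$ is $\obs(u)$ extended by one symbol and $P'(\obs(u^\star)) = P(\obs(u)) \ominus 1 \subseteq (i,j) \ominus 1 = (i-1, j\ominus 1)$, using that $\ominus 1$ is monotone for interval inclusion and that $i \ge 2$ gives $i \ominus 1 = i-1$ (and $i-1 \le j\ominus 1$, so $(i-1,j\ominus 1)$ is a legal interval). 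I expect the only delicate point to be discovering the delay-plus-shift construction and handling $\ominus$ at the boundary values $0$ and $\infty$; everything else is monotonicity of $\subseteq$ and of $\ominus$ together with the fact that faulty traces keep producing observations.
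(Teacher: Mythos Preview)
Your proof is correct and follows essentially the same approach as the paper: the paper also defines $P'$ by $P'(oe) = P(o) \ominus 1$ (taking the base case $P'(\varepsilon) = P(\varepsilon)$ rather than your $(0,\infty)$, a harmless variation) and sketches the same verification that $P'$ is a predictor and an $(i-1,j\ominus 1)$-predictor. You have simply fleshed out the details the paper leaves as ``easy to show,'' including the boundary cases of $\ominus$ and the existence of the one-observation-longer prefix $u^\star$.
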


\begin{proof}
  That $(i,j)$-predictability entails $(i,j+1)$-predictability 
  is trivial from Definition~\ref{defi::ijpredictor}: 
  an $(i,j)$-predictor is also an $(i,j+1)$-predictor 
  since the constraint on the prediction is strictly weaker.  

  Assume that the system is $(i,j)$-predictable with $i \ge 2$, 
  i.e., there exists an $(i,j)$-predictor $P$.  
  Then define $P'$ such that 
  \begin{itemize}
  \item $P'(\varepsilon) = P(\varepsilon)$ and 
  \item $P'(oe) = P(o) \ominus 1$.  
  \end{itemize}  
  It is easy to show that $P'$ is a predictor 
  (if the prediction $P(o)$ was correct, 
  then the prediction $P'(oe)$ is correct).  
  Furthermore it is easy to prove that $P'$ is an $(i-1),(j-1)$-predictor: 
  if $P(\obs(u)) \subseteq (i,j)$ for some prefix $u$ of $w$, 
  then for the prefix $ue$, 
  $P'(ue) = P(u) \ominus 1 \subseteq (i-1,j-1)$ 
  (or $(i-1,\infty)$ if $j = \infty$).  
\end{proof}

Lemma~\ref{lemm::pred->pred} shows that some levels of predictability 
are strictly weaker than others.  
There are however levels of predictability 
that are mutually incomparable.  
Consider the examples of Figure~\ref{fig::1123}.  
Clearly the system of Figure~\ref{fig::1123}a is $(1,1)$-predictable 
because a fault is always preceded by two $a$s 
and the occurrence of the first $a$ implies 
that the fault will be reached after the next observation; 
on the other hand it is not $(2,3)$-predictable 
because when the fault becomes unavoidable 
(i.e., it will occur after less than $3$ observations) 
then the fault can (and, actually, will) occur after less than $2$ observations.  
The system of Figure~\ref{fig::1123}b is $(2,3)$-predictable 
because the fault is always preceded by $aaa$ or $aabb$ 
and because observing a first $a$ implies that the fault is unavoidable; 
on the other hand, it is not $(1,1)$-predictable 
because, after observing $aa$, it is not possible to decide 
whether the fault will occur immediately or after two observations.  

\begin{figure*}[ht]
  \begin{minipage}{0.35\linewidth}
    \begin{center}
\begin{tikzpicture}[->,>=stealth',shorten >=1pt,auto,node distance=2cm,
                    semithick]
  \tikzstyle{every state}=[fill=none,draw=black,text=black]

  \node[state,initial,initial text=] (A) at (0,0) {};
  \node[state] (B) at (2,0) {};
  \node[state,fill=gray] (C) at (4,0) {};

  \path 
    (A) edge        node {$a$} (B)
    (B) edge        node {$a$} (C)
    (A) edge [loop] node {$b$} ()
    (C) edge [loop] node {$a$} ()
  ;
\end{tikzpicture}
    \end{center}
  \end{minipage}\hfill
  \begin{minipage}{0.55\linewidth}
    \begin{center}
\begin{tikzpicture}[->,>=stealth',shorten >=1pt,auto,node distance=2cm,
                    semithick]
  \tikzstyle{every state}=[fill=none,draw=black,text=black]

  \node[state,initial,initial text=] (A) at (0,0) {};
  \node[state] (B) at (2,0) {};
  \node[state] (C) at (4,0) {};
  \node[state] (D) at (6,1) {};
  \node[state,fill=gray] (E) at (6,-1) {};

  \path 
    (A) edge        node {$a$} (B)
    (B) edge        node {$a$} (C)
    (C) edge        node {$a$} (E)
    (C) edge        node {$b$} (D)
    (D) edge        node {$b$} (E)
    (A) edge [loop] node {$b$} ()
    (E) edge [in=-15,out=15,loop] node {$a$} ()
  ;
\end{tikzpicture}
    \end{center}
  \end{minipage}
  \begin{minipage}{0.4\linewidth}
    a. Example of a $(1,1)$-predictable set of faults 
    which is not $(2,3)$-predictable.
  \end{minipage}\hfill
  \begin{minipage}{0.5\linewidth}
    b. Example of a $(2,3)$-predictable (and $(1,2)$-predictable) 
    set of faults 
    which is not $(1,1)$-predictable.
  \end{minipage}
  \caption{Illustrating that some predictive levels are not comparable.}
    \label{fig::1123}
\end{figure*}
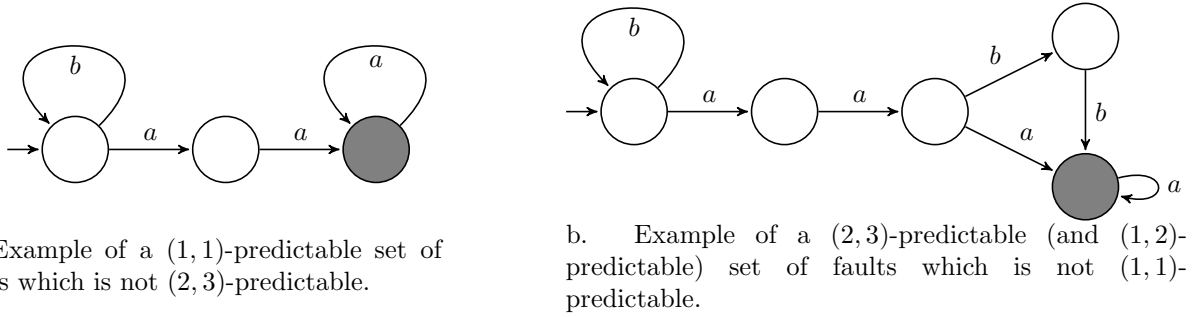

\subsection{Characterisation of Predictability}

In order to determine whether a system is predictable 
we define notions of distance between a system state and a fault.  

\begin{defi}
  The \emph{minimal distance} between $q$ and the set $F$ of states 
  denoted $\dmin{F}(q)$, is the minimum number of observations 
  before reaching $F$ from $q$ 
  \begin{displaymath}
    \dmin{F}(q) = \min_{(q \trans{u}) \ \in F} |\obs(u)| 
  \end{displaymath}
  and $\infty$ if there is no such $u$.  
  The \emph{maximal distance} between $q$ and a set of states $F$, 
  denoted $\dmax{F}(q)$, is the maximum number of observations 
  before reaching $F$ from $q$ 
  \begin{displaymath}
    \dmax{F}(q) = \max_{(q \trans{u}) \ \in Q \setminus F} |\obs(u)| + 1, 
  \end{displaymath}
  $\infty$ if there is no bound to $|\obs(u)|$, 
  and $0$ if $q \in F$ (i.e., there is no such $u$).  
\end{defi}

Notice that these distances are bounded by the number $|Q|$ of states 
when they are different from $\infty$.  
Indeed, if $\dmin{F}(q) \ge |Q|$ 
the corresponding trace includes a cycle, 
and a smaller trace therefore exists (by cutting the cycle).  
Similarly if $\dmax{F}(q) \ge |Q|$ 
the corresponding trace includes a cycle, 
and a longer trace exists (where the cycle can be taken once more).

The minimal and maximal distances give us 
a first estimate of the time interval before fault.  
To simplify notations we write $\distances(q)$ 
to denote the time interval $(\dmin{F}(q),\dmax{F}(q))$.  

\begin{lemm}\label{lemm::singlestateprediction}
  For all trace $w \in \lang$ and all prefix $u \prefix w$, 
  if $P$ is a predictor then
  \begin{displaymath}
    \distances(\qi \trans{u})\ \subseteq\ P(\obs(u)).  
  \end{displaymath}
\end{lemm}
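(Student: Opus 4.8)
The plan is to unfold the definition of interval inclusion. Write $q = \qi \trans{u}$ and $(x',y') = P(\obs(u))$, and recall that $(x,y) \subseteq (x',y')$ means $x' \le x \le y \le y'$. Thus $\distances(q) \subseteq P(\obs(u))$ amounts to the two inequalities $x' \le \dmin{F}(q)$ and $\dmax{F}(q) \le y'$, and I would obtain each of them from one of the two bullet points of Definition~\ref{defi::predictor} applied to a carefully chosen continuation of $u$. In both cases the key move is the same: given a finite continuation $v$ from $q$, the path underlying $uv$ is finite, hence by \emph{liveness} it extends to an infinite trace $w' \in \lang$ with $u \prefix uv \prefix w'$, so Definition~\ref{defi::predictor} applies to the pair $u_1 = u$, $u_2 = uv$, for which $|\obs(u_2)| - |\obs(u_1)| = |\obs(v)|$ since $\obs$ is a projection.

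For the lower bound I would argue $x' \le \dmin{F}(q)$. If $\dmin{F}(q) = \infty$ there is nothing to prove; otherwise pick $v$ realising the minimum, so $(q \trans{v}) \in F$ and $|\obs(v)| = \dmin{F}(q)$. If $\dmin{F}(q) < x'$, the first bullet of Definition~\ref{defi::predictor} would force $(\qi \trans{u_2}) \not\in F$, contradicting $\qi \trans{u_2} = q \trans{v} \in F$. Hence $\dmin{F}(q) \ge x'$. (This also settles the boundary case $q \in F$: then $v = \varepsilon$ works, $\dmin{F}(q) = 0$, and the same argument yields $x' = 0$.)

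For the upper bound I would argue $\dmax{F}(q) \le y'$. If $y' = \infty$ this is immediate; otherwise suppose for contradiction $\dmax{F}(q) > y'$. By the definition of $\dmax{F}$ there is then a continuation $v$ with $(q \trans{v}) \in Q \setminus F$ and $|\obs(v)| \ge y'$ --- taking the maximising $v$ if $\dmax{F}(q)$ is finite, or a non-faulty continuation with arbitrarily many observations if $\dmax{F}(q) = \infty$. Applying the second bullet of Definition~\ref{defi::predictor} to $u_2 = uv$, since $|\obs(u_2)| - |\obs(u_1)| = |\obs(v)| \ge y'$ we get $(\qi \trans{u_2}) \in F$, contradicting $q \trans{v} \in Q \setminus F$. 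Hence $\dmax{F}(q) \le y'$.

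Putting the two inequalities together gives $x' \le \dmin{F}(q) \le \dmax{F}(q) \le y'$, which is exactly $\distances(q) \subseteq P(\obs(u))$; the middle inequality is part of $\distances(q)$ being a well-formed time interval, and one checks it directly (when $\dmax{F}(q)$ is finite there is, by liveness, a path from $q$ with exactly that many observations, and its endpoint must lie in $F$, so $\dmin{F}(q) \le \dmax{F}(q)$). I do not expect a serious obstacle: the argument is a direct instantiation of the predictor constraints, and the only points requiring care are the bookkeeping around the value $\infty$ and the $+1$ offset in the definition of $\dmax{F}$, and the easy but necessary appeal to liveness to ensure the constructed continuations $uv$ are prefixes of genuine traces in $\lang$.
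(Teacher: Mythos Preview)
Your proposal is correct and follows essentially the same approach as the paper: both arguments instantiate the two bullets of Definition~\ref{defi::predictor} with a suitable continuation of $u$ (a minimising continuation into $F$ for the lower bound, a non-faulty continuation of length $\dmax{F}(q)-1$ for the upper bound) to derive the two inequalities by contradiction. If anything, your version is a bit more careful than the paper's, since you make explicit the appeal to liveness to produce an infinite trace $w'$ extending $uv$ and you separately handle the $\dmax{F}(q)=\infty$ case, both of which the paper leaves implicit.
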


\begin{proof}
  Let $(i,j) = \distances(\qi \trans{u})$ be the time interval 
  of the state $(\qi \trans{u})$ 
  and let $(x,y) = P(\obs(u))$ be the prediction of $\obs(u)$.  

  By definition of the minimal distance~$i$, 
  there exists a trace $u'$ 
  such that $u \prefix u' \prefix w$, 
  $|\obs(u')| - |\obs(u)| = i$, 
  and $(\qi \trans{u'}) \in F$.  
  Therefore $i < x$ would contradict the first condition 
  in the definition of a predictor (Def.~\ref{defi::predictor}).  

  Furthermore if $j > 0$, 
  then by definition of the maximal distance~$j$, 
  there exists a trace $u'$ 
  such that $u \prefix u' \prefix w$, 
  $|\obs(u')| - |\obs(u)| = j-1$, 
  and $(\qi \trans{u'}) \in Q \setminus F$.  
  Therefore $j-1 \ge y$ would contradict the second condition 
  in the definition of a predictor (Def.~\ref{defi::predictor}).  

  If $j = 0$ then $i = 0$ and $y \ge x \ge i = j$ implies $y \ge j$.  
\end{proof}

This result can be generalised to the collection of states 
that an observer can assume the system to be in 
(the \emph{belief state}).  
Formally the belief state $\belief(o)$ is the set of states 
that the system can be in if the sequence $o$ of observations 
has been observed: 
\begin{displaymath}
  \belief(o) = \set{ q \in Q \mid 
  \exists u.\ 
  \qi \trans{u} q\ \land\ \obs(u) = o}.  
\end{displaymath}

\begin{coro}\label{coro::predictorsupset}
  For all predictor $P$, 
  for all sequence $o$ of observations 
  \begin{displaymath}
    \left(\bigcup_{q \in \belief(o)} \distances(q) \right)
    \subseteq P(o).  
  \end{displaymath}
\end{coro}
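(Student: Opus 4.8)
The plan is to reduce the statement to Lemma~\ref{lemm::singlestateprediction}, applied one state at a time, and then check that forming the union of the per‑state time intervals preserves the inclusion.

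First I would fix an arbitrary sequence $o$ of observations and an arbitrary state $q \in \belief(o)$. By definition of the belief state there is a finite trace $u$ with $\qi \trans{u} q$ and $\obs(u) = o$. Since the system is live, $u$ is the prefix of some infinite trace $w \in \lang$, so Lemma~\ref{lemm::singlestateprediction} applies to $w$ and $u$ and yields $\distances(\qi \trans{u}) \subseteq P(\obs(u))$, i.e.\ $\distances(q) \subseteq P(o)$. Thus every member of $\set{\distances(q) \mid q \in \belief(o)}$ is contained in $P(o)$.

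It remains to argue that the union is still contained in $P(o)$. Write $P(o) = (x,y)$ and $\distances(q) = (\dmin{F}(q), \dmax{F}(q))$ for each $q \in \belief(o)$. By the previous paragraph and the definition of $\subseteq$ on time intervals, $x \le \dmin{F}(q) \le \dmax{F}(q) \le y$ for all such $q$. By the definition of $\cup$ on time intervals, $\bigcup_{q \in \belief(o)} \distances(q)$ has lower bound $\min_{q \in \belief(o)} \dmin{F}(q)$, which is $\ge x$, and upper bound $\max_{q \in \belief(o)} \dmax{F}(q)$, which is $\le y$; moreover the former is $\le$ the latter (each $\dmin{F}(q) \le \dmax{F}(q) \le \max_{q'} \dmax{F}(q')$), so the pair is a genuine time interval and the inequalities $x \le \min_q \dmin{F}(q) \le \max_q \dmax{F}(q) \le y$ are exactly the statement $\bigcup_{q \in \belief(o)} \distances(q) \subseteq P(o)$.

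I do not expect a real obstacle here; the only subtlety is the degenerate case $\belief(o) = \emptyset$ (an $o$ that the system cannot generate), where the union of zero time intervals needs a convention. I would dispose of this either by restricting attention to observation sequences actually produced by the system — for which $\belief(o) \neq \emptyset$ by construction, and which are the only ones that matter for predictors — or by fixing the convention that the empty union denotes the least time interval, which is contained in every $P(o)$ vacuously. Everything else is a direct application of Lemma~\ref{lemm::singlestateprediction} and the arithmetic of $\cup$ and $\subseteq$ on time intervals.
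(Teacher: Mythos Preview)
Your proposal is correct and follows essentially the same approach as the paper: pick $q\in\belief(o)$, use the definition of the belief state to obtain a trace $u$ with $\obs(u)=o$, apply Lemma~\ref{lemm::singlestateprediction} to get $\distances(q)\subseteq P(o)$, and then pass to the union. You are in fact more careful than the paper on two points it leaves implicit---invoking liveness to extend $u$ to an infinite $w\in\lang$ (needed for the hypothesis of Lemma~\ref{lemm::singlestateprediction}) and spelling out why the interval union stays inside $P(o)$---but the underlying argument is identical.
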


\begin{proof}
  If $q \in \belief(o)$ is an element of the belief state 
  then, by definition of the belief state, 
  there exists a trace $u$ such that $\qi \trans{u} q$ 
  and $o = \obs(u)$.  
  From Lemma~\ref{lemm::singlestateprediction}, 
  $\distances(q) \subseteq P(o)$, 
  which also applies to the union of these elements.  
\end{proof}

Actually it is possible to characterise 
the ``optimal'' predictor in terms of distances.  
Let $P$ and $P'$ be two predictors.  
We say that $P$ is stronger than $P'$, denoted $P \succeq P'$, 
iff $P(o) \subseteq P'(o)$ for all $o$.%
\footnote{We assume that $P(o)$ and $P'(o)$ 
are undefined if $o$ cannot be generated by the system 
($\belief(o) = \emptyset$).}
We denote $\optipred$ the \emph{optimal predictor}: 
$\optipred = \max_\succeq \set{P \mid P \textit{ is a predictor}}$.  
It should be clear that the optimal predictor is well-defined and unique.  

\begin{lemm}\label{lemm::defopti}
  The optimal predictor $\optipred$ 
  is exactly the predictor that satisfies $\optipred(o) = 
  \left(\bigcup_{q \in \belief(o)} \distances(q) \right)$ 
  for all sequence $o$ of observations.  
\end{lemm}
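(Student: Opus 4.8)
The plan is to verify directly that the map $f$ given by $f(o) = \bigcup_{q \in \belief(o)} \distances(q)$ is a predictor. Once that is done the lemma is immediate: Corollary~\ref{coro::predictorsupset} states exactly that $f(o) \subseteq P(o)$ for \emph{every} predictor $P$, so $f \succeq P$ for all predictors $P$, which is to say that $f$ is the $\succeq$-maximum of $\set{P \mid P \text{ is a predictor}}$, i.e.\ $f = \optipred$.

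So the whole content is the claim that $f$ is a predictor. As a preliminary remark, $f(o)$ really is a time interval: each $\distances(q)$ is a time interval, hence so is the finite union $\bigcup_{q\in\belief(o)}\distances(q) = (\min_q \dmin{F}(q), \max_q \dmax{F}(q))$. The main step is to check the two conditions of Definition~\ref{defi::predictor} for $P := f$. Fix $w \in \lang$ and $u_1 \prefix u_2 \prefix w$; write $u_2 = u_1 v$, put $q_1 = (\qi \trans{u_1})$ and $q_2 = (\qi \trans{u_2})$ (well defined by determinism), and let $(x,y) = f(\obs(u_1))$. Since $q_1 \in \belief(\obs(u_1))$ and $\obs$ is a morphism, we get $x \le \dmin{F}(q_1)$, $y \ge \dmax{F}(q_1)$, and $|\obs(u_2)| - |\obs(u_1)| = |\obs(v)|$.

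For the first condition, assume $|\obs(v)| < x \le \dmin{F}(q_1)$: if $q_2 = (q_1 \trans{v})$ were in $F$, then $v$ would be a trace reaching $F$ from $q_1$ using strictly fewer than $\dmin{F}(q_1)$ observations, contradicting the minimality of $\dmin{F}(q_1)$; hence $q_2 \notin F$. For the second condition, assume $|\obs(v)| \ge y \ge \dmax{F}(q_1)$: if $q_1 \in F$ then $q_2 \in F$ by the fault-propagation property, and if $q_1 \notin F$ but $q_2 \notin F$, then $(q_1 \trans{v}) \in Q \setminus F$ would give $|\obs(v)| + 1 \le \dmax{F}(q_1) \le y \le |\obs(v)|$, which is absurd; so $q_2 \in F$ (the subcase $y = \infty$ being vacuous). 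This establishes both conditions, so $f$ is a predictor, and the lemma follows as above.

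I expect the only delicate point to be the bookkeeping around the boundary values of $\dmax{F}$ — namely $0$ when $q_1 \in F$, and $\infty$ when $F$ is unreachable from $q_1$ or reachable only through arbitrarily long observed runs — together with the increment by one built into the definition of $\dmax{F}$, which is exactly what makes the ``$|\obs(v)| + 1 \le \dmax{F}(q_1)$'' step come out with the right strictness. The $\dmin{F}$ side involves only strict inequalities and presents no difficulty.
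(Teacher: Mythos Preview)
Your proof is correct and follows essentially the same approach as the paper: both invoke Corollary~\ref{coro::predictorsupset} for one containment and then verify directly that $f(o) = \bigcup_{q \in \belief(o)} \distances(q)$ satisfies the two conditions of Definition~\ref{defi::predictor}. The only cosmetic difference is that the paper argues each condition by contrapositive (assuming $q_2 \notin F$, resp.\ $q_2 \in F$, and showing the premise fails), whereas you argue directly; your treatment of the boundary cases $q_1 \in F$ and $y = \infty$ is in fact slightly more explicit than the paper's.
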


\begin{proof}
  Let $(i,j) = \left(\bigcup_{q \in \belief(o)} \distances(q) \right)$ 
  and $(x,y) = \optipred(o)$.  
  From Corollary~\ref{coro::predictorsupset} 
  we already know that $(i,j) \subseteq (x,y)$.  
  We only need to prove that $P(o) = (i,j)$ 
  is a correct prediction.  

  Following Definition~\ref{defi::predictor} 
  let $w \in \lang$ be an infinite trace 
  and let $u_1,u_2$ be two finite traces 
  such that $u_1 \prefix u_2 \prefix w$ and $\obs(u_1) = o$.  
  Let us call $q_1$ the state reached by $u_1$ 
  and $q_2$ the state reached by $u_2$: 
  $\qi \trans{u_\ell} q_\ell$.  
  By definition of the belief state, $q_1 \in \belief(o)$.  
  To prove that $P(o)$ is a correct prediction 
  we need to prove that the two conditions 
  of Definition~\ref{defi::predictor} are satisfied.  

  Assume that $q_2 \not\in F$; 
  we shall prove that the premise of the second condition 
  in Definition~\ref{defi::predictor} is not satisfied.  
  By definition of the maximal distance of $q_1$: 
  $\dmax{F}(q_1) > |\obs(u_2)| - |\obs(u_1)|$.  
  Since we know $j \ge \dmax{F}(q_1)$, it clearly holds 
  that $|\obs(u_2)| - |\obs(u_1)| < j$.  

  Assume instead that $q_2 \in F$; 
  we shall prove this time 
  that the premise of the first condition is not satisfied.  
  By definition of the minimal distance of $q_1$: 
  $\dmin{F}(q_1) \le |\obs(u_2)| - |\obs(u_1)|$.  
  Since we know $i \le \dmin{F}(q_1)$, it clearly holds 
  that $|\obs(u_2)| - |\obs(u_1)| \ge i$.  
\end{proof}

As it turns out $\optipred(o)$ equals the union of exactly two intervals.  

\begin{lemm}\label{lemm::twostates}
  For all sequence $o$ of observations 
  such that $\belief(o) \neq \emptyset$, 
  there exists a pair of states $\set{q_1,q_2} \subseteq \belief(o)$ 
  such that $\optipred(o) = \distances(q_1) \cup \distances(q_2)$.  
\end{lemm}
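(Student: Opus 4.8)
The plan is to unfold the characterisation of $\optipred$ given by Lemma~\ref{lemm::defopti} and to exploit the explicit arithmetic form of the union on time intervals. Recall that $\optipred(o) = \bigcup_{q \in \belief(o)} \distances(q)$, that $\distances(q) = (\dmin{F}(q),\dmax{F}(q))$, and that $(x,y) \cup (x',y') = (\min(x,x'),\max(y,y'))$. Since by hypothesis $\belief(o) \neq \emptyset$ and $\belief(o) \subseteq Q$ is finite, this union is a finite union and it evaluates to $\optipred(o) = \left(\min_{q \in \belief(o)} \dmin{F}(q),\ \max_{q \in \belief(o)} \dmax{F}(q)\right)$, both extrema being attained (and possibly equal to $\infty$, which causes no difficulty as $\min$ and $\max$ are well-defined on $\naturals \cup \set{\infty}$).

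Next I would choose $q_1 \in \belief(o)$ to be a state attaining the minimal value, i.e.\ $\dmin{F}(q_1) = \min_{q \in \belief(o)} \dmin{F}(q)$, and $q_2 \in \belief(o)$ to be a state attaining the maximal value, i.e.\ $\dmax{F}(q_2) = \max_{q \in \belief(o)} \dmax{F}(q)$; both exist by finiteness and non-emptiness of $\belief(o)$. Then I would compute $\distances(q_1) \cup \distances(q_2) = \left(\min(\dmin{F}(q_1),\dmin{F}(q_2)),\ \max(\dmax{F}(q_1),\dmax{F}(q_2))\right)$. Because $\dmin{F}(q_1)$ is the minimum over the whole belief state we have $\dmin{F}(q_1) \le \dmin{F}(q_2)$, so the first component equals $\dmin{F}(q_1) = \min_{q \in \belief(o)} \dmin{F}(q)$; symmetrically $\dmax{F}(q_2) \ge \dmax{F}(q_1)$, so the second component equals $\dmax{F}(q_2) = \max_{q \in \belief(o)} \dmax{F}(q)$. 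Hence $\distances(q_1) \cup \distances(q_2) = \optipred(o)$, as required, with $\set{q_1,q_2} \subseteq \belief(o)$.

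There is essentially no serious obstacle in this argument; the only points deserving a word of care are that the set over which we take the minimum and the maximum is non-empty (which is exactly the hypothesis) and finite (being a subset of $Q$), so that the extremal states $q_1$ and $q_2$ genuinely exist, and that allowing the value $\infty$ for $\dmin{F}$ or $\dmax{F}$ does not break the computation of $\min$, $\max$, or $\cup$.
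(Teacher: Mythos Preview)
Your argument is correct and follows essentially the same approach as the paper: invoke Lemma~\ref{lemm::defopti}, use finiteness and non-emptiness of $\belief(o)$ to pick a state realising the minimal lower bound and a state realising the maximal upper bound, and conclude that the union of their two intervals already equals $\optipred(o)$. Your write-up is in fact more explicit than the paper's, which dispatches the lemma in three sentences; the added care about attainment of the extrema and the $\infty$ case is fine but not strictly needed.
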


\begin{proof}
  From Lemma~\ref{lemm::defopti} 
  $\optipred(o)$ is the union of a finite collection of intervals.  
  Because this set is finite, there is an interval, say $\distances(q_1)$, 
  whose lower bound is minimal; 
  similarly there is an interval, say $\distances(q_2)$, 
  whose upper bound is maximal.  
  Therefore $\optipred(o) = \distances(q_1) \cup \distances(q_2)$.  
\end{proof}

The optimal predictor exhibits some very interesting properties.  

\begin{lemm}\label{lemm::onemoreobs}
  For all sequence $o$ of observations, 
  \begin{displaymath}
    \optipred(oe) \subseteq \optipred(o) \ominus 1.  
  \end{displaymath}
\end{lemm}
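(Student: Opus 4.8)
The plan is to reduce the statement to a one-step claim about individual states and then lift it to belief states using Lemma~\ref{lemm::defopti}. Concretely, I would first establish the following auxiliary fact: whenever $q \trans{s} q'$ with $|\obs(s)| = 1$, then $\distances(q') \subseteq \distances(q) \ominus 1$. Writing $(i,j) = \distances(q)$ and $(i',j') = \distances(q')$ and using the characterisation of $\subseteq$ on time intervals, this amounts to the two inequalities $i \ominus 1 \le i'$ and $j' \le j \ominus 1$ (the inequality $i' \le j'$ holding because $\distances(q')$ is itself a time interval).

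For the lower bound: if $i \in \set{0,\infty}$ the claim is immediate — when $i = 0$ the bound $0 \le i'$ is trivial, and when $i = \infty$ the set $F$ is unreachable from $q$, hence from $q'$, so $i' = \infty$. If $1 \le i < \infty$, I argue by contradiction: a trace $v$ with $(q' \trans{v})\ \in F$ and $|\obs(v)| < i-1$ would give $(q \trans{sv})\ \in F$ with $|\obs(sv)| = 1 + |\obs(v)| < i$, contradicting minimality of $i$; hence $i' \ge i-1 = i \ominus 1$. For the upper bound: if $j = \infty$ there is nothing to prove; if $j = 0$ then $q \in F$, so $q' \in F$ (faults are preserved by transitions) and $j' = 0$; if $1 \le j < \infty$ then $q \notin F$, and for every $v$ with $(q' \trans{v})\ \in Q\setminus F$ we have $(q \trans{sv})\ \in Q\setminus F$, so $|\obs(sv)| + 1 \le j$, i.e. $|\obs(v)| + 1 \le j-1$; taking the maximum over all such $v$ (note $v = \varepsilon$ handles $q' \notin F$, while $q' \in F$ gives $j' = 0 \le j-1$ since $j\ge 1$) yields $j' \le j-1 = j \ominus 1$.

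To lift this to the lemma, fix $o$ and $e$ with $\belief(oe) \neq \emptyset$ (if $\belief(oe) = \emptyset$ the statement is vacuous, and $\belief(oe)\neq\emptyset$ entails $\belief(o)\neq\emptyset$, so $\optipred(o)$ is defined). For any $q' \in \belief(oe)$ there is a trace $u'$ with $\qi \trans{u'} q'$ and $\obs(u') = oe$; splitting $u'$ just before the event that produces the final observation $e$, write $u' = us$ with $\obs(u) = o$ and $\obs(s) = e$ (hence $|\obs(s)| = 1$), and set $q = \qi \trans{u}$, so that $q \in \belief(o)$ and $q \trans{s} q'$. By the auxiliary fact $\distances(q') \subseteq \distances(q) \ominus 1$; since $\distances(q) \subseteq \optipred(o)$ by Lemma~\ref{lemm::defopti} and $\ominus 1$ is monotone with respect to $\subseteq$ on time intervals (because $\ominus 1$ is non-decreasing on $\naturals \cup \set{\infty}$), we get $\distances(q') \subseteq \optipred(o) \ominus 1$. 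As this holds for every $q' \in \belief(oe)$, and a union of time intervals each contained in a common interval is itself contained in that interval, Lemma~\ref{lemm::defopti} gives $\optipred(oe) = \bigcup_{q' \in \belief(oe)} \distances(q') \subseteq \optipred(o) \ominus 1$.

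The main obstacle is keeping the boundary cases consistent: the truncating operator $\ominus 1$ fixes both $0$ and $\infty$, while $\dmin{F}$ and $\dmax{F}$ carry their own special values at faulty states and at unreachable/unbounded configurations, so the case split in the auxiliary fact must be lined up carefully with these conventions. One also needs the two easy monotonicity facts used above — that $\ominus 1$ is non-decreasing and that $\cup$ of intervals respects $\subseteq$ — while the structural decomposition $u' = us$ is routine given determinism and the definition of $\obs$.
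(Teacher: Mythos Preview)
Your proposal is correct and follows essentially the same route as the paper: establish the one-step inclusion $\distances(q') \subseteq \distances(q) \ominus 1$ for states $q \trans{s} q'$ with $|\obs(s)|=1$, then lift it to belief states via Lemma~\ref{lemm::defopti} using that every $q' \in \belief(oe)$ arises from some $q \in \belief(o)$ in this way. Your treatment of the boundary cases ($i\in\set{0,\infty}$, $j\in\set{0,\infty}$, $q'\in F$) is in fact more careful than the paper's, which states the pointwise inequalities somewhat loosely.
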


\begin{proof}
  Let $u_1 \prefix u_2$ be two finite traces 
  such that $|\obs(u_2)| = |\obs(u_1)| + 1$.  
  Then by definition 
  $\dmin{F}(\qi \trans{u_1}) \ge \dmin{F}(\qi \trans{u_2}) + 1$ 
  (unless $\dmin{F}(\qi \trans{u_1}) = 0$).  
  Similarly 
  $\dmax{F}(\qi \trans{u_1}) \le \dmax{F}(\qi \trans{u_2}) + 1$ 
  (unless $\dmax{F}(\qi \trans{u_1}) = \infty$).  
  
  Therefore $\distances(\qi \trans{u_1}) 
  \subseteq \distances(\qi \trans{u_2}) \ominus 1$.  

  For each state $q_2 \in \belief(oe)$, 
  there exists a state in $q_1 \in \belief(o)$ 
  such that two such traces $u_1 \prefix u_2$ 
  lead respectivement to $q_1$ and $q_2$ 
  (but notice that for some $q_1$, there may be no such $q_2$).  
  Therefore $\optipred(oe) = \bigcup_{q_2 \in \belief(oe)} \distances(q_2) 
  \subseteq \bigcup_{q_1 \in \belief(oe)} \distances(q_1) \ominus~1
  = \optipred(o) \ominus 1$.  
\end{proof}

The optimal predictor can be used to decide predictability.  
Indeed from Definition~\ref{defi::ijpredictor} 
any suboptimal predictor enjoys only a (non-necessarily strict) subset 
of $(i,j)$-predictability qualities of the optimal predictor.  
This is expressed in the following corollary 
where non-predictability is proved 
if $(i,j)$ is a strict subset ($\subset$) 
of some prediction $\optipred(o)$.  

\begin{coro}\label{coro::predictfromoptimal}
  If $\dmin{F}(\qi) \ge i$ 
  the system is not $(i,j)$-predictable 
  iff there exists a sequence $o$ of observations 
  such that $(i,j) \subset \optipred(o)$.  
\end{coro}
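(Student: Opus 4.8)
The plan is to prove both directions of the biconditional, using the optimal predictor $\optipred$ as the pivot. The hypothesis $\dmin{F}(\qi) \ge i$ is needed only to rule out the degenerate case where the fault might already be present at the outset, so that $(i,j)$-predictability is not trivially ruined at $o = \varepsilon$; I would note this at the start.

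For the ``if'' direction (existence of $o$ with $(i,j) \subset \optipred(o)$ implies non-predictability), I would argue by contradiction. Suppose the system is $(i,j)$-predictable, so some $(i,j)$-predictor $P$ exists. Pick a witness $o$ with $(i,j) \subset \optipred(o)$; since $\belief(o) \neq \emptyset$, choose a trace $u$ and an infinite extension $w \in \lang$ with $\obs(u) = o$. The strict containment means that either $\optipred(o)$ has a strictly smaller lower bound than $i$, or a strictly larger upper bound than $j$. By Lemma~\ref{lemm::twostates} there are states $q_1, q_2 \in \belief(o)$ realising these extremes, and for each I would exhibit, as in the proof of Lemma~\ref{lemm::singlestateprediction}, a concrete finite extension $u'$ of $u$ along $w$ witnessing the relevant distance (a short faulty continuation of length $\dmin{F}(q_1)$, or a long non-faulty continuation of length $\dmax{F}(q_2)-1$). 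Applying Definition~\ref{defi::predictor} to $P$ at this witness forces $P(o)$ itself to have lower bound $\le \dmin{F}(q_1) < i$ or upper bound $\ge \dmax{F}(q_2) > j$, i.e. $P(o) \not\subseteq (i,j)$. But then $P$ can never issue a prediction $\subseteq (i,j)$ along $w$ once the observations extend past $o$: any later prediction is constrained by Lemma~\ref{lemm::onemoreobs} and by $\optipred \succeq P$... — more carefully, I would instead continue $w$ to a faulty trace that must pass through observation sequences with $\optipred \not\subseteq (i,j)$ and use the belief-state monotonicity; the cleanest route is to pick $w \in \langf{F}$ extending $u$ and show every prefix $u''$ of $w$ with $\obs(u'') $ extending $o$ keeps $P(\obs(u'')) \not\subseteq (i,j)$, while the prefixes before reaching the "$\optipred$ bad" point cannot help either because $\optipred$ only gets worse going backwards (Lemma~\ref{lemm::onemoreobs}), contradicting that $P$ is an $(i,j)$-predictor for $w$.

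For the ``only if'' direction I would prove the contrapositive: if no such $o$ exists, i.e. $\optipred(o) \subseteq (i,j)$ for every $o$ with $\optipred(o) \neq (\,\cdot\,,\infty)$ roughly speaking, then $\optipred$ itself witnesses $(i,j)$-predictability. Given any $w \in \langf{F}$, the fault occurs at some finite prefix, so $\dmin{F}$ decreases to $0$ along $w$ and $\dmax{F}$ is eventually finite; hence there is a prefix $u$ with $\dmax{F}(\qi \trans{u})$ finite, and then $\optipred(\obs(u))$ has finite upper bound, so by assumption $\optipred(\obs(u)) \subseteq (i,j)$. That gives exactly the condition of Definition~\ref{defi::ijpredictor}, and since this holds for every faulty $w$, $\optipred$ is an $(i,j)$-predictor. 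The hypothesis $\dmin{F}(\qi)\ge i$ guarantees the lower-bound constraint $i\ge 1$ is not violated at $\varepsilon$ when $i\ge 2$ is needed.

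The main obstacle I expect is the ``if'' direction: turning ``$\optipred(o)$ is too wide at some $o$'' into ``\emph{every} predictor fails for \emph{some} faulty trace $w$'' requires care, because $(i,j)$-predictability only demands a good prediction at \emph{one} prefix of each faulty trace, so I must choose $w$ so that $o$ (or worse) is unavoidable along it and show that \emph{no} prefix of that particular $w$ — neither before nor after $o$ — can yield a prediction $\subseteq (i,j)$. Combining Lemma~\ref{lemm::twostates} (two-state decomposition), Lemma~\ref{lemm::onemoreobs} (predictions only widen going backward), and Corollary~\ref{coro::predictorsupset} (every predictor is weaker than $\optipred$) should close this, but assembling them into a single faulty-trace argument is the delicate part.
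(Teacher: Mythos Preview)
Your contrapositive (``only if'') direction has a real gap. You paraphrase the hypothesis ``no $o$ with $(i,j)\subset\optipred(o)$'' as ``$\optipred(o)\subseteq(i,j)$ whenever the upper bound is finite,'' but these are not equivalent: $\optipred(o)=(i+1,j+1)$ has finite upper bound, is not $\subseteq(i,j)$, yet does not strictly contain $(i,j)$. Consequently, finding a prefix $u$ with $\dmax{F}(\qi\trans{u})<\infty$ does not by itself let you invoke the hypothesis. The paper closes this by an intermediate-value step on the \emph{lower} bound: from $\dmin{F}(\qi)\ge i$ one gets that the lower bound of $\optipred(\varepsilon)$ is $\ge i$; at a faulty prefix the lower bound is $0$; by Lemma~\ref{lemm::onemoreobs} it drops by at most one per observation; hence some prefix $u'$ has lower bound exactly $i$, and then $(i,j)\not\subset\optipred(\obs(u'))=(i,y')$ forces $y'\le j$. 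This is precisely where the hypothesis $\dmin{F}(\qi)\ge i$ is used, not in the vague way you describe at the end.

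Your ``if'' direction also does not go through as sketched. Lemma~\ref{lemm::onemoreobs} says $\optipred(oe)\subseteq\optipred(o)\ominus 1$, which is a \emph{shift}, not a widening; it does not give you that prefixes of $o$ still have $\optipred$ strictly containing $(i,j)$. And for prefixes \emph{after} $o$ there is no reason $\optipred$ stays wide---the belief state can shrink. The paper does not split into before/after $o$ at all. Instead it takes the two states of Lemma~\ref{lemm::twostates} and builds \emph{two} traces: a faulty $w$ in which a state in $\belief(o)$ reaches $F$ after exactly $x\le i$ observations (realising $\dmin{F}$), and a second $w'$ in which another state in $\belief(o)$ stays out of $F$ for $y-1\ge j$ observations (realising $\dmax{F}$). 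If $w$ were $(i,j)$-predictable by $\optipred$, then the good-prediction prefix $u_3$ must satisfy $|\obs(u_2)|-|\obs(u_3)|\ge i$ by Definition~\ref{defi::predictor}, which (since $|\obs(u_2)|-|\obs(u_1)|=x\le i$) forces $\obs(u_3)$ to be a strict prefix of $o$. One then transplants this same observation prefix to $w'$ and uses the long non-faulty continuation $u'_2$ to violate the upper-bound clause of Definition~\ref{defi::predictor}. The crux you are missing is this transplant between the two witnesses, not a monotonicity argument along a single trace.
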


\begin{proof}
  We assume $\dmin{F}(\qi) \ge i$.  

  $\Leftarrow$
  Assume that there is no sequence $o$ of observations 
  such that $(i,j) \subset \optipred(o)$.  
  Consider a faulty trace $w$.  
  We shall show that $w$ is $(i,j)$-predictable.  

  Let $u \prefix w$ be a faulty prefix: 
  $(\qi \trans{u}) \in F$.  
  Then by Definition~\ref{defi::predictor} of a predictor, 
  $\optipred(\obs(u)) = (x,y)$ where $x = 0$.  
  Notice also that $\optipred(\obs(\varepsilon)) 
  = (x^\varepsilon,y^\varepsilon)$ where $x^\varepsilon \ge i$.  
  From Lemma~\ref{lemm::onemoreobs} 
  we know that adding one observation to a sequence 
  can reduce the lower bound of the interval returned by $\optipred$ 
  only by $1$.  
  Therefore, since the lower bound is greater than or equal to $i$ 
  for $\varepsilon$ and down to $0$ for $\obs(u)$, 
  there is a prefix $u'$ of $u$ 
  such that $\optipred(u') = (x',y')$ and $x' = i$.  
  But since $(i,j) \not\subset (x',y')$, 
  $(x',y') \subseteq (i,j)$ and the faulty trace $w$ 
  is $(i,j)$-predictable (Def.~\ref{defi::ijpredictor}).

  $\Rightarrow$
  Let $o$ be the sequence of observations 
  such that $(i,j) \subset \optipred(o)$ 
  and let $(x,y)$ be this interval $\optipred(o)$.  
  Notice that $y \ge 1$ since $(i,j)$ is not empty.  

  Assume $y \neq \infty$.  
  From Lemma~\ref{lemm::twostates} 
  and from the definition of the time intervals 
  there exists $u_1 \prefix u_2 \prefix w$ 
  and $u'_1 \prefix u'_2 \prefix w'$ 
  such that 
  \begin{itemize}
  \item 
    $\set{w,w'} \in \lang$, 
  \item 
    $\obs(u_1) = \obs(u'_1) = o$, 
  \item 
    $|\obs(u_2)| - |\obs(u_1)| = x$, 
  \item 
    $(\qi \trans{u_2}) \ \in F$, 
  \item 
    $|\obs(u'_2)| - |\obs(u'_1)| = y-1$, 
  \item 
    $(\qi \trans{u'_2}) \ \in Q \setminus F$.  
  \end{itemize}
  We shall prove by contradiction that $w$ is not $(i,j)$-predictable.  
  
  Assume that $w$ is $(i,j)$-predictable.  
  Then there exists a prefix $u_3$ of $w$ 
  such that $\optipred(u_3) \subseteq (i,j)$.  
  Because of the first condition of Definition~\ref{defi::predictor}, 
  this prefix must be such that $|\obs(u_2)| - |\obs(u_3)| \ge i$, 
  and therefore $|\obs(u_1)| - |\obs(u_3)| \ge 0$.  
  We know that $\optipred(\obs(u_1)) \not\subseteq (i,j)$, 
  therefore $|\obs(u_1)| - |\obs(u_3)| \ge 1$ and $u_3 \prefix u_1$.  

  Because $u_1$ and $u'_1$ generate the same sequence of observations, 
  there exists a prefix $u'_3$ of $u'_1$ (and therefore of $u'_2$) 
  that generates the same sequence of observations as $u_3$.  
  Furthermore, we know that $|\obs(u'_2)| - |\obs(u'_3)| = 
  |\obs(u'_2)| - |\obs(u_3)| > |\obs(u'_2)| - |\obs(u_1)| = 
  |\obs(u'_2)| - |\obs(u'_1)| = y - 1$; 
  that is: $|\obs(u'_2)| - |\obs(u'_3)| \ge y$.  
  According to the second condition of Definition~\ref{defi::predictor}, 
  $(\qi \trans{u'_2}) \ \in F$, 
  which contradicts the last item of the six items 
  presented at the beginning of this proof.  

  The proof under the assumption that $y = \infty$ is very similar.  
  We choose $u'_2$ such that $|\obs(u'_2)| - |\obs(u'_1)| > |Q|+2$.  
  This proves that the system is not $(i,|Q|+1)$-predictable.  
  Since we know that a bound bigger than $|Q|$ is equivalent 
  to that of $\infty$, we show 
  that the system is not $(i,\infty)$-predictable.  
\end{proof}

Notice that if $\dmin{F}(\qi) < i$, 
then the system is not $(i,j)$-predictable for any $j$ 
(even $j = \infty$).  

Combining Corollary~\ref{coro::predictfromoptimal} 
and Lemma~\ref{lemm::twostates}, 
we obtain the following theorem.  

\begin{theo}\label{theo::computingpredictability}
  The system is $(i,j)$-predictable 
  iff $\dmin{F}(\qi) \le i$ and 
  for all sequence $o$ of observations, 
  for all pair of states $(q_1,q_2) \subseteq \belief(o)$, 
  \begin{displaymath}
    (i,j) \not\subset \distances(q_1) \cup \distances(q_2).  
  \end{displaymath}
\end{theo}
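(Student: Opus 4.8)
The plan is to read the theorem off Corollary~\ref{coro::predictfromoptimal} by substituting for the optimal predictor its two-interval description from Lemma~\ref{lemm::twostates}, after first disposing of the degenerate case in which the initial state lies too close to $F$.

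First I would handle the distance condition at $\qi$. If $\dmin{F}(\qi) < i$, then along a shortest faulty trace $w$ the fault is reached within strictly fewer than $i$ observations counted from $\varepsilon$, and the first clause of Definition~\ref{defi::predictor} then forbids any predictor from ever issuing, at a prefix of $w$, a prediction with lower bound $i$; hence the system is not $(i,j)$-predictable for any $j$. This is precisely the remark stated after Corollary~\ref{coro::predictfromoptimal}, and it accounts for the necessity of the condition relating $\dmin{F}(\qi)$ and $i$. So I may assume $\dmin{F}(\qi) \ge i$, which is exactly the hypothesis under which Corollary~\ref{coro::predictfromoptimal} applies and states that the system is $(i,j)$-predictable if and only if there is no sequence $o$ of observations with $(i,j) \subset \optipred(o)$.

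Next I would rewrite ``no $o$ with $(i,j) \subset \optipred(o)$'' in terms of pairs of belief states. In one direction, Lemma~\ref{lemm::defopti} gives $\optipred(o) = \bigcup_{q \in \belief(o)} \distances(q)$, so for every pair $\set{q_1,q_2} \subseteq \belief(o)$ we have $\distances(q_1) \cup \distances(q_2) \subseteq \optipred(o)$; thus if $(i,j) \subset \distances(q_1) \cup \distances(q_2)$ for some $o$ and some such pair, then $(i,j) \subset \optipred(o)$. In the other direction, Lemma~\ref{lemm::twostates} supplies, for each $o$ with $\belief(o) \neq \emptyset$, a pair $\set{q_1,q_2} \subseteq \belief(o)$ with $\optipred(o) = \distances(q_1) \cup \distances(q_2)$, so $(i,j) \subset \optipred(o)$ immediately yields $(i,j) \subset \distances(q_1) \cup \distances(q_2)$ for that pair. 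Hence ``$\exists o.\ (i,j) \subset \optipred(o)$'' is equivalent to ``$\exists o.\ \exists \set{q_1,q_2} \subseteq \belief(o).\ (i,j) \subset \distances(q_1) \cup \distances(q_2)$'', and negating and combining with the previous paragraph gives the theorem. Sequences $o$ with $\belief(o) = \emptyset$ are not generated by the system and contribute no pair, so they may be ignored; the case $j = \infty$ needs no separate treatment here, since it is already handled inside the proof of Corollary~\ref{coro::predictfromoptimal}.

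The one delicate point, and the place I would be most careful, is the arithmetic of strict versus non-strict interval containment: I need that $(i,j) \subset X$ together with $X \subseteq Y$ forces $(i,j) \subset Y$, and dually that $(i,j) \subset X_1 \cup X_2$ with $X_1 \cup X_2 \subseteq \optipred(o)$ makes $(i,j)$ a \emph{proper} subinterval of $\optipred(o)$ rather than equal to it. Both are immediate from $(x,y) \subseteq (x',y') \Leftrightarrow x' \le x \le y \le y'$ and from $(x,y) \cup (x',y') = (\min(x,x'),\max(y,y'))$, but they are exactly the kind of step where a hasty argument could collapse a strict inclusion into an equality and break the equivalence.
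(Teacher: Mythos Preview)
Your proposal is correct and matches the paper's approach exactly: the paper simply states that the theorem is obtained by ``combining Corollary~\ref{coro::predictfromoptimal} and Lemma~\ref{lemm::twostates}'', which is precisely your argument (handle the degenerate distance condition at $\qi$ via the remark following the corollary, then use Lemma~\ref{lemm::twostates} to replace $\optipred(o)$ by a union of two $\distances$ intervals). Your additional care about strict versus non-strict inclusion is sound and goes beyond what the paper spells out.
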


We write $q_1 \sim q_2$ the relation 
indicating that the two states $q_1$ and $q_2$ 
appear together in a belief state.  
Notice that $\sim$ is not an equivalence relation 
(it is not transitive).  

\subsection{Algorithms}

We now turn to implementation of Theorem~\ref{theo::computingpredictability}.  
The algorithm includes four steps: 
\begin{enumerate}
\item 
  Compute the minimal distance for each state; 
\item 
  Compute the maximal distance for each state; 
\item 
  Compute the twin plant which represents the $\sim$ relation; 
\item 
  Compute the $(i,j)$-predictability.  
\end{enumerate}
All parts of the verification process will be presented here 
to ensure the paper is self-contained.  

Algorithm~\ref{algo::dmin} computes the minimal distance of each state.  
In this algorithm and the following one, 
$c(e) = 1$ if $e$ is observable and $0$ otherwise.  
It assumes that all states have infinite distance 
until it is has been proved that a shorter distance exists.  
It then sets all faulty states' minimal distance to $0$ 
and updates the minimal distances of all states 
until convergence is reached.  
To make sure that the states are explored in the optimal order 
we use a priority queue $\queue$ 
that orders its elements by smaller value $\dmin{F}(q)$; 
however since $\queue$ only contains elements 
with two types of distances 
(the current distance and this distance plus one), 
the queue can be implemented with two buckets.  
The complexity of the algorithm is therefore linear 
in the number $|T|$ of transitions.  

\begin{algorithm}[ht]
  \begin{algorithmic}
  \STATE 
    {\bf Input}: an FSM $\langle Q,\Sigma,T,\qi,\sigmao\rangle$, 
    a set of states $F \subseteq Q$
  \STATE 
    Create a table $\dmin{F}: Q \rightarrow \mathbf{N} \cup \{\infty\}$
  \FORALL{$q \in Q$}
    \STATE $\dmin{F}(q) := \infty$
  \ENDFOR
  \STATE 
    $\queue = \emptyset$
  \FORALL{$q \in F$}
    \STATE $\dmin{F}(q) := 0$
    \STATE $\queue := \queue \cup \set{q}$
  \ENDFOR
  \WHILE{$\queue \neq \emptyset$} 
    \STATE $q' := \mathrm{pop}(\queue)$
    \FORALL{$\langle q,e,q'\rangle \in T$}
      \IF{$\dmin{F}(q) > \dmin{F}(q') + c(e)$}
        \STATE $\dmin{F}(q) := \dmin{F}(q') + c(e)$
        \STATE $\queue := \queue \cup \set{q}$
      \ENDIF
    \ENDFOR
  \ENDWHILE
  \STATE {\bf return} $\dmin{F}$ 
  \end{algorithmic}
  \caption{Computing the minimal distance.}
  \label{algo::dmin}
\end{algorithm}

\paragraph{}
Algorithm~\ref{algo::dmax} computes the maximal distance of each state.  
It starts by computing the list of states ($N$) 
that can stay outside of $F$ forever 
(those states have infinite maximal distance).  
It then initialises every state with a maximal distance of $0$ 
and updates the distance whenever it finds a bigger value.  
This update will eventually terminate 
(after at most $|Q|$ iterations).  
The first part of the algorithm requires to iterate twice
over all transitions; 
the second part requires to iterate at most $|Q|$ times 
over at most all transitions.  
Therefore the complexity of Algorithm~\ref{algo::dmax} 
is at most $|Q| \times |T|$.  

\begin{algorithm}[p]
\begin{algorithmic}
\STATE 
  {\bf Input}: an FSM $\langle Q,\Sigma,T,\qi,\sigmao\rangle$, 
  a set of states $F \subseteq Q$
\STATE 
  Let $N := Q \setminus F$
\STATE 
  Create map $\nbsucc: N \rightarrow \naturals$
\STATE
  $R := \emptyset$
\FORALL{$q \in N$}
  \STATE $\nbsucc[q] := |\set{ \langle q, e, q'\rangle \in T}|$ 
  \IF{$\nbsucc[q] = 0$}
    \STATE $R := R \cup q$
  \ENDIF
\ENDFOR
\WHILE{$R \neq \emptyset$}
  \STATE 
    Let $q' := \mathrm{pop}(R)$
  \FORALL{$\langle q, e, q'\rangle \in (N \times \Sigma \times \set{q'})$}
    \STATE $\nbsucc[q] := \nbsucc[q] - 1$
    \IF{$\nbsucc[q] = 0$}
      \STATE $R := R \cup q$
    \ENDIF
  \ENDFOR 
\ENDWHILE
\STATE 
  Create a table $\dmax{F}: Q \rightarrow \mathbf{N} \cup \{\infty\}$
\FORALL{$q \in Q$}
  \IF{$q \in N$} 
    \STATE $\dmax{F}(q) := \infty$
  \ELSE
    \STATE $\dmax{F}(q) := 0$
  \ENDIF
\ENDFOR
\STATE 
  $\textit{needsUpdate} := \textit{true}$
\WHILE{$\textit{needsUpdate}$}
  \STATE $\textit{needsUpdate} := \textit{false}$
  \FORALL{$q' \in Q \setminus N$}
    \FORALL{$\langle q, e, q'\rangle \in T$}
      \IF{$\dmax{F}(q) < \dmax{F}(q') + c(e)$}
        \STATE $\dmax{F}(q) := \dmax{F}(q') + c(e)$
        \STATE  $\textit{needsUpdate} := \textit{true}$
      \ENDIF
    \ENDFOR
  \ENDFOR
\ENDWHILE
\STATE {\bf return} $\dmax{F}$ 
\end{algorithmic}
  \caption{Computing the maximal distance.}
  \label{algo::dmax}
\end{algorithm}

\paragraph{}
The twin plant \cite{jiang-etal::tac::01} is a construction 
that determines precisely the $\sim$ relation.  
Notice that, strictly speaking, it is not necessary 
to build it as a finite state machine: 
for predictability only the $\sim$ relation matters; 
not the transitions between the states of the twin plant.  

  Given an FSM $A = \langle Q,\Sigma,T,\qi,\sigmao\rangle$, 
  the \emph{twin plant} is the finite state machine 
  $\langle Q^T,\Sigma^T,T^T,\qi^T,\sigmao\rangle$ 
  where 
  \begin{itemize}
  \item 
    $Q^T = Q \times Q$, 
  \item 
    $\Sigma^T = \big((\Sigma \setminus \sigmao) \times \{1,2\}\big)
    \ \cup\ \Sigmao$, 
  \item 
    $T^T = $

    $
    \begin{array}{c c c @{\ \ \land\ \ } c @{\ \ \land\ \ }c@{}c}
      &
      \set{ 
      \langle\langle q_1,q_2\rangle,e^T,\langle q'_1,q'_2\rangle\rangle\mid&
      \langle q_1, e^T, q'_1\rangle \in T
      &
      \langle q_2, e^T, q'_2\rangle \in T
      &
      e^T \in \sigmao
      &
      }
      \\
      \cup & 
      \set{ 
      \langle\langle q_1,q_2\rangle,e^T,\langle q'_1,q'_2\rangle\rangle\mid&
      \langle q_1, e^T, q'_1\rangle \in T
      &
      q_2 = q'_2 
      &
      e^T \in \Sigma \setminus \sigmao
      &
      }
      \\
      \cup &
      \set{ 
      \langle\langle q_1,q_2\rangle,e^T,\langle q'_1,q'_2\rangle\rangle\mid&
      q_1 = q'_1 
      &
      \langle q_2, e^T, q'_2\rangle \in T
      &
      e^T \in \Sigma \setminus \sigmao
      &
      }.  
    \end{array}$
  \item 
    $\qi^T = \langle \qi,\qi\rangle$.  
  \end{itemize}

It is well-known that the state $\langle q_1,q_2\rangle$ of $Q^T$ 
is reachable from $\qi^T$ iff $q_1 \sim q_2$ 
(Lemma~10, \cite{genc-lafortune::automatica::09} 
where the twin plant is called \emph{verifier}).  
The twin plant can therefore be used to verify the $(i,j)$-predictability.  

The procedure for computing the $(i,j)$-predictability 
is given in Algorithm~\ref{algo::pred}.  
It generates an array $p$ such that for all $i$, 
the system is $(i,p[i])$-predictable and non-$(i,p[i]-1)$-predictable.  
The algorithm first initialises $p[i]$ to $i$.  
It then iterates over all the states of the twin plant 
and updates the table $p$.  
According to Theorem~\ref{theo::computingpredictability} 
the result of Algorithm~\ref{algo::pred} 
is the list of $(i,p[i])$-predictabilities that the system enjoys.  

\begin{algorithm}
  \begin{algorithmic}[1]
    \STATE {\bf input} an FSM $A = \langle Q,\Sigma,T,\qi,\sigmao\rangle$, 
    the list of minimal and maximal distances $\dmin{F}$ and $\dmax{F}$, 
    the twin plant $\langle Q^T,\Sigma^T,T^T,\qi^T,\sigmao\rangle$
    \STATE Create an array of integers $p: \naturals^{|\min{F}(\qi)|}$
    \FORALL{$i \in \{1,\dots,|Q|\}$}
        \STATE $p[i] := i$
    \ENDFOR
    \FORALL{$\langle q,q'\rangle \in \mathrm{Reachable}(Q^T)$}
    \STATE $i := \min (\dmin{F}(q),\dmin{F}(q'))$
    \STATE $j := \max (\dmax{F}(q),\dmax{F}(q'))$
      \STATE $p[i] := \max(p[i],j)$\label{line::ss->notpred}
    \ENDFOR
    \STATE {\bf return} $p$
  \end{algorithmic}
  \caption{Algorithm to compute $(i,j)$-predictability}
  \label{algo::pred}
\end{algorithm}

We claim that the algorithm presented here is quadratic 
in the number of states and transitions of the system.  
It is easy to see that computing the distances 
is at most quadratic for both types of distances, 
and that the resulting structure has linear size 
with constant time access.  
The size of the twin plant is quadratic 
in the size of the original system 
(it includes at most $|Q|^2$ states 
and $\left(2 \times |T| \times |Q|\right) + |T|^2$ transitions)---%
and that is assuming a non-deterministic model. 
Finally the fourth step 
requires iterating over the quadratic number of states in the twin plant.  

Our definitions of predictability and $i$-predictability 
match those of J\'eron et al. \cite{jeron-etal::wc::08} 
and the proposed algorithm can therefore be used 
to verify these properties.  
It is also possible to simplify it 
by focussing on the $i$ parameter.  

As a last result, consider a \emph{fully observable system}, 
i.e., a system in which $\obs(u) = u$.  
Then, at any time, the state of the system can be deduced 
from the sequence of observations; 
but notice that how the system will evolve 
remains unknown.  
Then the relation $\sim$ is equivalent to identity: 
$q \sim q'$ iff $q = q'$.  
Consequently, after the distances of each state have been computed, 
the predictability can be computed in linear time.  

\subsection{Building the Optimal Predictor}

Lemma~\ref{lemm::defopti} gives us a procedure 
for computing the optimal predictor.  
Similarly to diagnosis and its diagnoser \cite{sampath-etal::tac::95} 
it is possible to compute a deterministic FSM 
that represents how the belief state evolves 
as more observations are gathered.  

Formally the \emph{optimal predictor} is a finite state machine 
$\langle \optimal{Q}, \optimal{\Sigma}, \optimal{T}, \optimal{\qi}\rangle$ 
where 
\begin{itemize}
\item 
  $\optimal{Q} = \set{\optimal{q} \mid \optimal{q} \subseteq Q}$, 
\item 
  $\optimal{\Sigma} = \Sigmao$, 
\item 
  $\optimal{T} \subseteq 
  \optimal{Q} \times \optimal{\Sigmao} \times \optimal{Q}$ 
  is defined below, 
  and 
\item 
  $\optimal{\qi} = \set{\qi}$. 
\end{itemize}

For every state $\optimal{q_1} \in \optimal{Q}$ of the optimal predictor
and every event $e \in \optimal{\Sigma}$, 
there is exactly one state $\optimal{q_2}$ 
such that $\optimal{q_1} \trans{e} \optimal{q_2}$ 
is a transition of the optimal predictor.  
The state $\optimal{q_2} \subseteq Q$ is defined 
as the set of states of the system 
that can be reached from a state of $\optimal{q_1}$ 
through a path that generates only one observation: 
\begin{displaymath}
  \optimal{q_2} = 
  \set{ q_2 \in Q \mid \exists q_1 \in \optimal{q_1}.\ 
    \exists u.\ (q_1 \trans{u} q_2) \land (|\obs(u)| = 1)
  }.
\end{displaymath}

Given a sequence $o$ of observations 
the predictor follows the single path labeled by $o$ 
on the predictor and reaches the state $\optimal{q}(o)$ 
(i.e., the state $\optimal{q}(o)$ 
such that $\optimal{\qi} \trans{o} \optimal{q}(o)$).  
The prediction is then $\bigcup_{q \in \optimal{q}(o)} \distances(q)$.%
\footnote{If the model is correct, 
then the state $\optimal{q}(o) = \set{}$ should never be reached 
and the union is therefore well-defined.}
Adding a single observation $e$ to $o$, 
the new prediction can be easily computed 
by getting the state $\optimal{q}(oe)$ that satisfies 
$\optimal{q}(o) \trans{e} \optimal{q}(oe)$.  
Assuming the optimal FSM 
and the interval associated with each state of the predictor 
are precomputed, 
the optimal prediction of a sequence of observations 
is linear in the size of this sequence 
and the incremental optimal prediction is constant time.  
Notice however that, as is the case with the diagnoser 
\cite{rintanen::ijcai::07}, 
the optimal predictor is exponentially large 
in the number of states of the system.  

\section{Related Work}
\label{sec::related}
Predictability as presented in this paper 
was introduced by Genc and Lafortune 
\cite{genc-lafortune::safeprocess::06}.  
Their approach was however only Boolean: 
they addressed the question 
``can the fault be predicted before it occurs?''  
They presented an exponential space algorithm 
using a structure similar to our optimal predictor.  
They also announced the existence of a polytime algorithm, 
similar to the twin plant used for diagnosability 
and formally presented in an extension of their work 
\cite{genc-lafortune::automatica::09}.  

Together with J\'eron and Marchand, 
they proposed an additional improvement to lower the complexity 
down to quadratic \cite{jeron-etal::wc::08}.  
We claim here that their algorithm is not quite quadratic 
(we discuss this question at the end of this section).  
Their approach is very similar to the approach presented 
in the previous section: 
They construct a twin plant and verify predictability 
by checking whether there exists a pair $q_1\sim q_2$ 
such that $\dmin{F}(q_1) = 0$ and $\dmax{F}(q_2) = \infty$.  

Brand\'an Briones and Madalinski presented 
the notions of $lb$-predictability and $ub$-predictability 
\cite{brandanbriones-madalinski::sccc::11}.  
$ub$-predictability is similar to our definition of $i$-predictability 
meaning that the fault is predicted at least $i$ observations 
before the fault occurs.  
$lb$-predictability is the equivalent 
of our property of $(1,j)$-predictability, 
meaning that it is possible to predict the fault occurrence 
before it occurs but when at most $j$ observations 
are still possible before the fault 
(in other words, the fault prediction is not too early).  

\begin{figure}[ht]
  \begin{minipage}{0.47\linewidth}
  \begin{center}
\begin{tikzpicture}[->,>=stealth',shorten >=1pt,auto,node distance=2cm,
                    semithick]
  \tikzstyle{every state}=[fill=none,draw=black,text=black]

  \node[state] (A) at (0,0) {$A$};

  \node[state] (B1) at (2,2) {$B_1$};
  \node (Bm) at (2,0) {$\dots$};
  \node[state] (Bn) at (2,-2) {$B_n$};

  \node[state] (C) at (4,0) {$C$};

  \node[state] (D1) at (6,2) {$D_1$};
  \node (Dm) at (6,0) {$\dots$};
  \node[state] (Dn) at (6,-2) {$D_n$};

  \path
    (A) edge node[above] {$a$} (B1)
    (A) edge node[below] {$a$} (Bn)
    (B1) edge node[above] {$t$} (C)
    (Bn) edge node[below] {$t$} (C)
    (C) edge node[above] {$a$} (D1)
    (C) edge node[below] {$a$} (Dn)
    (D1) edge[in=-15,out=15,loop] node {$a$} (D1)
    (Dn) edge[in=-15,out=15,loop] node {$a$} (Dn)
  ;
\end{tikzpicture}
  \end{center}
  \end{minipage}%
  \begin{minipage}{0.06\linewidth}
  \ 
  \end{minipage}%
  \begin{minipage}{0.47\linewidth}
  \begin{center}
\begin{tikzpicture}[->,>=stealth',shorten >=1pt,auto,node distance=2cm,
                    semithick]
  \tikzstyle{every state}=[fill=none,draw=black,text=black]

  \node[state] (A) at (0,0) {$A$};

  \node[state] (B1) at (2,2) {$B_1$};
  \node (Bm) at (2,0) {$\dots$};
  \node[state] (Bn) at (2,-2) {$B_n$};

  \node[state] (D1) at (6,2) {$D_1$};
  \node (Dm) at (6,0) {$\dots$};
  \node[state] (Dn) at (6,-2) {$D_n$};

  \path
    (A) edge node[above] {$a$} (B1)
    (A) edge node[below] {$a$} (Bn)
    (B1) edge node[above] {$a$} (D1)
    (B1) edge node[below] {$a$} (Dn)
    (Bn) edge node[above] {$a$} (D1)
    (Bn) edge node[below] {$a$} (Dn)
    (D1) edge[in=-15,out=15,loop] node {$a$} (D1)
    (Dn) edge[in=-15,out=15,loop] node {$a$} (Dn)
  ;

\end{tikzpicture}
  \end{center}
  \end{minipage}

  \begin{minipage}{0.47\linewidth}
  \begin{center}
    {a. System.}
  \end{center}
  \end{minipage}%
  \begin{minipage}{0.06\linewidth}
  \ 
  \end{minipage}%
  \begin{minipage}{0.47\linewidth}
  \begin{center}
    {b. $\varepsilon$-reduction of the system.}
  \end{center}
  \end{minipage}
  \caption{DES (a) and its $\varepsilon$-reduction (b).}
  \label{fig::quadra}
\end{figure}
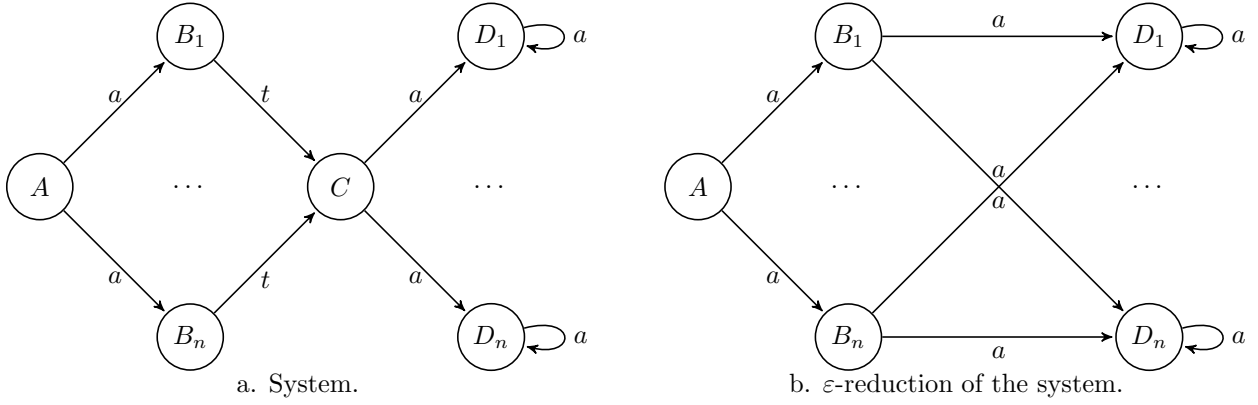

\paragraph{}
While this is a minor issue, 
we provide an example and a comprehensive discussion 
that illustrate the complexity error from J\'eron et al. 
\cite{jeron-etal::wc::08}.  
Consider the example of Figure~\ref{fig::quadra}a.  
This DES includes $2n+2$ states and $4n$ transitions.  
The single observable event is $a$ 
and the single unobservable event is $t$ 
(this example does not feature any faulty event).  
The twin plant then consists in $2n^2+2$ states and $4n^2$ transitions 
(details in Table~\ref{tab::tpsize}).  
The $\varepsilon$-reduction, presented on Figure~\ref{fig::quadra}b, 
contains one state fewer than the original DES 
but $n^2 + 2n$ transitions.  
As a consequence, the number of states in the twin plant 
reduces down to $2n^2+1$ 
but the number of transitions shoots up to $n^4+2n^2$  
(details in Table~\ref{tab::tpsize2}).  

\begin{table}[t!]
  \begin{minipage}{.5\linewidth}
    \begin{center}
      \begin{tabular}{c | c}
        Type of states & Number of states\\ 
        \hline 
        $\langle A, A\rangle$ & $1$\\
        $\langle B_i, B_j\rangle$ & $n^2$\\
        $\langle C, C\rangle$ & $1$\\
        $\langle D_i, D_j\rangle$ & $n^2$\\
        \hline
        Total: & $2n^2 + 2$
      \end{tabular}
    \end{center}
  \end{minipage}%
  \begin{minipage}{.5\linewidth}
    \begin{center}
      \begin{tabular}{c | c}
        Type of transitions & Number of transition\\ 
        \hline 
        $\langle A, A\rangle \rightarrow \langle B_i, B_j\rangle$ & $n^2$\\
        $\langle B_i, B_j\rangle \rightarrow \langle C, C\rangle$ & $n^2$\\
        $\langle C, C\rangle \rightarrow \langle D_i, D_j\rangle$ & $n^2$\\
        $\langle D_i, D_j\rangle \rightarrow \langle D_i, D_j\rangle$ 
        & $n^2$\\
        \hline
        Total: & $4 n^2$
      \end{tabular}
    \end{center}
  \end{minipage}
  \caption{Size of the twin plant for the DES in Figure~\ref{fig::quadra}a.}
  \label{tab::tpsize}
\end{table}

\begin{table}[t!]
  \begin{minipage}{.5\linewidth}
    \begin{center}
      \begin{tabular}{c | c}
        Type of states & Number of states\\ 
        \hline 
        $\langle A, A\rangle$ & $1$\\
        $\langle B_i, B_j\rangle$ & $n^2$\\
        $\langle D_i, D_j\rangle$ & $n^2$\\
        \hline
        Total: & $2n^2 + 1$
      \end{tabular}
    \end{center}
  \end{minipage}%
  \begin{minipage}{.5\linewidth}
    \begin{center}
      \begin{tabular}{c | c}
        Type of transitions & Number of transition\\ 
        \hline 
        $\langle A, A\rangle \rightarrow \langle B_i, B_j\rangle$ & $n^2$\\
        $\langle B_i, B_j\rangle \rightarrow \langle D_k, D_\ell\rangle$ 
        & $n^4$\\
        $\langle D_i, D_j\rangle \rightarrow \langle D_i, D_j\rangle$ 
        & $n^2$\\
        \hline
        Total: & $n^4 + 2 n^2$
      \end{tabular}
    \end{center}
  \end{minipage}
  \caption{Size of the twin plant for the $\varepsilon$-reduced 
  DES in Figure~\ref{fig::quadra}b.}
  \label{tab::tpsize2}
\end{table}

\section{Conclusion}
We presented a notion of $(i,j)$-predictability, 
an extension of predictability that specifies 
that there exists a time interval 
during which the fault occurrence is bound to happen in the system.  
This notion is very useful 
because it allows one to express different type of predictability, 
namely whether a fault can be predicted well in advance, 
whether the time of failure can be precisely predicted, 
or both.  

There are several obvious extensions to these works, 
mainly regarding the expressive power of the modelling framework.  
We want to extend this work to timed systems 
\cite{cassez-grastien::formats::13}, 
to probabilistic systems \cite{nouioua-etal::dx::14}, 
or to hybrid systems \cite{bayoudh-etal::ecai::08}.  
Other works include the extension of the current work 
to decentralised predictors \cite{takai-kumar::tac::12}, 
the study of optimal observability for predictability 
akin to that of diagnosability \cite{brandanbriones-etal::dx::08}
or in combinaison with opacity constraints
\cite{chedor-etal::jdeds::14}.  

\section*{Acknowledgments}
NICTA is funded by the Australian Government 
through the Department of Communications 
and the Australian Research Council 
through the ICT Centre of Excellence Program. 

\bibliographystyle{alpha}
\bibliography{bib}

\end{document}